\newtheorem{thm}{Theorem}
\newtheorem{cond}{Condition}
\newtheorem{lem}{Lemma}
\newcounter{rtaskno}
\begin{document}

\title{ Generalized Formalism for Information Backflow in assessing Markovianity and its equivalence to Divisibility}
\author{Sagnik Chakraborty}
\email{csagnik@imsc.res.in}
\affiliation{Optics and Quantum Information Group, The Institute of Mathematical
Sciences, C. I. T. Campus, Taramani, Chennai 600113, India}
\affiliation{Homi Bhabha National Institute, Training School Complex, Anushakti Nagar, Mumbai 400094, India}

\begin{abstract}
We present a general framework for the information backflow (IB) approach of Markovianity that not only includes a large number, if not all, of IB prescriptions proposed so far but also is equivalent to CP-divisibility for invertible evolutions. Following the common approach of IB, where monotonic decay of some physical property or some information quantifier is seen as the definition of Markovianity, we propose in our framework a general description of what should be called a proper `physicality quantifier' to define Markovianity. We elucidate different properties of our framework, and use them to argue that an infinite family of non-Markovianity measures can be constructed, which would capture varied strengths of non-Markovianity in the dynamics. Moreover, we show that generalized trace-distance measure in $2$ dimension serve as a sufficient criteria for IB-Markovianity for a number of prescriptions suggested earlier in literature.
\end{abstract}

\pacs{03.65.Yz, 03.65.Ta, 42.50.Lc}

\maketitle

\section{Introduction}
Studying properties of open quantum system has recently received renewed focus from researchers working in various disciplines of physics. Interaction of a system with its environment gives rise to complex patterns of information flow between them and often this results in scenarios where the system evolution  retains memory of earlier times. This distinct property has been used to classify open quantum dynamics into two broad categories: {\it Markovian} or {\it memoryless} and {\it non-Markovian}. Although the classical analogue of this classification is well defined \cite{breuerbook,breuercolloquium,rhp_review},  the definition of Markovianity in quantum regime, is debated. There are numerous prescriptions which capture different aspects of this complex behavior, but a single unified description is yet to be found.

All the prescriptions suggested so far can be broadly classified into two main categories: {\it completely positive divisibility} (CPD) \cite{chru,rivas}, and {\it information backflow} (IB) \cite{breuer,wise1,wise,udevi,fisher1,temporalsteering,luo,interferometric,heLQU}. The CPD approach comes from a mathematical point of view, where a dynamical process is called Markovian if evolution up to a particular time $t$, can be broken down into two valid quantum evolutions: one up to an intermediate time $s$ (for any $s<t$), followed by another from $s$ to $t$. The IB approach, on the other hand describes a dynamical process to be Markovian if some {\it quantifier}, i.e. some physical property or some quantifier of information, decreases in a monotonic fashion under the action of the process. This approach identifies non-monotonic decrease of a quantifier, as backflow of information  from the environment to the system, which clearly is a signature of non-Markovianity. The IB category can be further divided into two classes based on the type of quantifier used. One, which uses quantifiers based on the system  only, which we call the {\it information backflow system only} (IBS) class, and the other that uses an ancilla to define the quantifier, which we call the {\it information backflow system-ancilla} (IBSA) class. The IBS class involves system based quantifiers like {\it distinguishability} \cite{breuer,wise1} and {\it generalized trace-distance} \cite{wise}. A number of measures to quantify the IBS class has also been suggested, namely, {\it fidelity} \cite{udevi}, {\it quantum fisher information} \cite{fisher1},  {\it temporal steering weight} \cite{temporalsteering}, etc. Any such measure can be used as a quantifier to define Markovianity. Similarly, the IBSA class involves joint system-ancilla quantifiers like {\it quantum mutual information} \cite{luo}, {\it interferometric power} \cite{interferometric}, {\it local quantum uncertainty} \cite{heLQU} etc.

With so many notions of Markovianity present, even within the IB category, one is compelled to look for inter-relations, hierarchies, or equivalence, that might be present within them. A number of studies \cite{comparison,haikkacomparing,Zengcomparison,chrucomparison} have already shown that these notions are not in general equivalent. To our knowledge, all the prescriptions under the IB category can be shown to be CPD but the converse is not true. Attempts have been made to find hierarchies \cite{chru_random,chenhierarchy} or equivalence \cite{bylickaconstructive,buscemiequivalence,chruscinski2017universal} within these approaches, either by concentrating on specific models or by using modified forms of some quantifier. But a general universal description that applies to any generic dynamics and any meaningful quantifier is not yet found.
 
In this paper, we attempt to tackle this problem by constructing a formalism that is independent of any particular form of quantifier. We give a generalized form of quantifier, called the {\it physicality quantifier} (PQ), for the whole of IB category. We set a minimal requirement criteria for any quantity to qualify as a PQ i.e. it should be non-increasing under any physical process. In doing so, we found that a large class, if not all, of quantifiers considered so far in the literature come as special cases of our generalized form. Basically we consider an ensemble  of quantum states, and define a  PQ as a real bounded function on the ensemble, that is non-increasing under any {\it completely positive} (CP) {\it trace preserving} (TP) map. We call a given dynamics, {\it  information backflow Markovian} or {\it IB-Markovian} if all possible PQ defined on any ensemble decreases in a monotonic fashion with time. IB-Markovianity is then sub-divided into {\it system only} (S) and {\it system-ancilla} (SA) class, where in the later case the ancilla considered, is of same dimension as the system. A dynamics is said to belong to S-Markovian class, if the choice of ensemble for PQ is restricted to include system states only, whereas it belongs to SA-Markovian class, if we allow arbitrary system-ancilla joint states in the ensemble. Later, we show that the S-Markovian class is a subset of the SA-Markovian class. Thus, SA-Markovianity comes out as an equivalent criteria to IB-Markovianity. Also, it can be easily inferred from the definition of PQ,  that our formalism is automatically CPD.

The motivation behind choosing this specific definition of PQ are two fold. On one hand, as we demonstrate here, it allows  for a lucid description of information flow between system and environment, which clearly shows backflow, whenever there is a departure from monotonicity in decay of a PQ. On the other hand, it results in filtering out of those quantities which are decreasing, even for isolated systems, under unitary evolution. Thus our definition serves as a minimal criteria, that captures only those quantities which reflect information exchange between system and environment. 

We then examine different properties of our formalism and prove that, for invertible dynamics, the following are equivalent: $(i)$ SA-Markovianity or IB-Markovianity, $(ii)$ Markovianity  with respect to generalized trace-distance measure (GTD) on an extended  system-ancilla space, and $(iii)$ CPD. We also show that our formalism can be used to construct an infinite family of non-Markovianity measures, which would capture varied strengths of memory effects present in the dynamics. Moreover, we prove that for qubit dynamics, GTD (defined only on the system) serves as a sufficient criteria for IB-Markovianity, for quantifiers defined on one or two system states; like quantum fisher information, fidelity, distinguishability, etc. Finally, we  present some applications of our formalism and discuss the context of our formalism with respect to incoherent, unital and non-invertible dynamics.

In Sec \ref{Preliminaries}, we present mathematical preliminaries, which describe notations and definitions used in the paper. In Sec \ref{Generalized formalism}, we propose our generalized formalism for information backflow, following which we discuss some general properties of it in Sec \ref{General properties}. In Sec \ref{qubit case} we prove certain results of our formalism for the qubit case. Finally, in Sec \ref{applications} we present some applications of our formalism, before discussing and concluding in Sec \ref{discussions} and section \ref{conclusion}.

\section{Preliminaries}
\label{Preliminaries}
If $\mathcal{H}$ is a Hilbert space, let $\mathcal{L}(\mathcal{H})$ be the space of all linear operators and   $\mathcal{P}_+(\mathcal{H})$ the set of all density matrices on $\mathcal{H}$. Let  $\mathcal{T}(\mathcal{H},\mathcal{H})$ denote the space of all linear maps from $\mathcal{L}(\mathcal{H})$ to $\mathcal{L}(\mathcal{H})$. Now, consider a $d-$dimensional system and a $d-$dimensional ancilla with Hilbert spaces $\mathcal{H_S}$ and $\mathcal{H_A}$, respectively.  A dynamical map $\Lambda_t\in\mathcal{T}(\mathcal{H_S},\mathcal{H_S})$ is a CPTP map describing evolution up to a  time  $t$. The full dynamics is described by a family of time-parametrized CPTP maps $\Lambda:=\{\Lambda_t\}_t$. By invertible dynamics, we mean a dynamical map $\Lambda_t$ which is invertible for all $t$. Note that, most of physical dynamical maps are invertible. Even the thermalization process, where any initial state evolves towards a fixed thermal state, is invertible for all finite times. 

{\bf Definition 1.} A dynamical map $\Lambda_t$ is said to be divisible if it can be expressed as,
\begin{equation}
\label{cpd}
\Lambda_t=V_{t,s}\Lambda_s,
\end{equation}
for any $ t>s$, where $V_{t,s}\in\mathcal{T}(\mathcal{H_S},\mathcal{H_S})$. If $V_{t,s}$ is (completely) positive for any $ t>s$, the dynamics is called (completely) positive divisible, and abbreviated as (C)PD.

Note that, $V_{t,s}$ represents intermediate evolution from $s$ to $t$, and it is uniquely defined only when  $\Lambda_t$ is invertible. Recently, Chruscinski et. al. \cite{chruscinski2017universal} showed that the necessary and sufficient condition for divisibility is, $Ker(\Lambda_s)\subseteq Ker(\Lambda_t)$ for any $t>s$, where $Ker(\Lambda)$ represents kernel of $\Lambda$. For a detailed mathematical characterization of divisibility and (C)PD, refer to \cite{chruscinski2017universal} . An ensemble on the system $\mathcal{E}_S:=\{p_i;\rho_i\}_{i=1}^n$ is defined as a finite collection of states $\rho_i\in\mathcal{P}_+(\mathcal{H_S})$ with {\it a priori} probabilities $p_i$. Similarly, we define  $\mathcal{E}_{SA}:=\{p_i;\xi_i\}_{i=1}^n$ on system-ancilla with $\xi_i\in\mathcal{P}_+(\mathcal{H_S}\otimes\mathcal{H_A})$. Let $\mathcal{F}_S^n:=\big\{\mathcal{E}_S~|~\mathcal{E}_S=\{p_i;\rho_i\}_{i=1}^n \big\}$ and $\mathcal{F}_{SA}^n:=\big\{\mathcal{E}_{SA}~|~\mathcal{E}_{SA}=\{p_i;\xi_i\}_{i=1}^n \big\}$ be the collection of all ensembles with $n$ elements. We define the set of all possible ensembles of any size by,
\begin{align}
\label{ensemble}
\mathcal{F}_S:=\bigcup\limits_{n=1}^{\infty}\mathcal{F}_S^n~~~;~~~
\mathcal{F}_{SA}:=\bigcup\limits_{n=1}^{\infty}\mathcal{F}_{SA}^n.
\end{align}
\begingroup
\begin{table}
\caption{Physicality Quantifiers and their Class and Type}
\begin{ruledtabular}
\begin{tabular}{c  c  c}
   {\bf ~Type} & {\bf Class~~} & {\bf Physicality quantifier} \\
  \hline
  $\mathcal{I}_S^1$ & $1$-S-Markovian & Quantum Fisher information \cite{fisher1} \\
  &&\\
  && Fidelity \cite{udevi} \\
  $\mathcal{I}_S^2$ & $2$-S-Markovian & State distinguishability \cite{breuer} \\
  && Generalized trace-distance \cite{wise} \\
  &&\\
  $\mathcal{I}_S^m$ & $m$-S-Markovian\footnote{If Alice makes measurement $M_{a|x}$, where $a=1,\dots,m_1$ and $x=1,\dots,m_2$, then $m=m_1m_2$ (see Appendix \ref{appendix physicality}).}  & Temporal steering weight \cite{temporalsteering}\\
  &&\\
  && Quantum mutual information \cite{luo}\\
  $\mathcal{I}_{SA}^1$ & $1$-SA-Markovian  &  Interferometric power \cite{interferometric} \\
  && Local quantum uncertainty \cite{heLQU}\\
  &&\\
  $\mathcal{I}_{SA}^2$ & $2$-SA-Markovian &  Generalized trace-distance extended  
 \end{tabular}
\end{ruledtabular}
\label{table1}

\end{table}
\endgroup

\section{Generalized formalism for information backflow}
\label{Generalized formalism}
We now present a generalized formalism for the IB category in such a way that a large class of IB prescriptions proposed so far \cite{breuer,wise1,wise,udevi,fisher1,temporalsteering,luo,interferometric,heLQU} falls into it. We identify an essential feature which is common to all quantifiers considered in the literature i.e. they are unitarily invariant and non-increasing under CPTP maps. This suggests a ready generalization of the definition of  quantifier. Therefore, we propose to define two types of PQ, $\mathcal{I}_S:\mathcal{F}_S\mapsto\mathbb{R}$ and $\mathcal{I}_{SA}:\mathcal{F}_{SA}\mapsto\mathbb{R}$, as real bounded functions on ensembles of quantum states, which follow  condition \ref{condition1}. For convenience,   any function $f_S(\mathcal{E}_S)$ or $f_{SA}(\mathcal{E}_{SA})$, is also represented by  symbols $f_S\big\{p_i;\rho_i\big\}$ or $f_{SA}\big\{p_i;\xi_i\big\}$, respectively.
\begin{cond}
\label{condition1}
 Let $T\in\mathcal{T}(\mathcal{H_S},\mathcal{H_S})$ be any CPTP map, acting on the system. For a given form of $\mathcal{I}_S$ or $\mathcal{I}_{SA}$, the following are true:
 \begin{align}
 \mathcal{I}_S\big\{p_i;T[\rho_i]\big\}&\leq\mathcal{I}_S\big\{p_i;\rho_i\big\},\nonumber\\
 \mathcal{I}_{SA}\big\{p_i;(T\otimes I)[\xi_i]\big\}&\leq\mathcal{I}_{SA}\big\{p_i;\xi_i\big\}\nonumber
 \end{align}
 where $I$ denotes the identity map in $\mathcal{T}(\mathcal{H_A},\mathcal{H_A})$.
\end{cond}
Note that, this readily implies invariance of PQ under any unitary evolution \cite{unitaryinv}, i.e.
\begin{align}
\mathcal{I}_S\big\{p_i;U\rho_iU^{\dagger}\big\}&=\mathcal{I}_S\big\{p_i;\rho_i\big\},\label{system unitary}\\ 
 \mathcal{I}_{SA}\big\{p_i;(U\otimes \mathbb{I})\xi_i(U^{\dagger}\otimes \mathbb{I})\big\}&=\mathcal{I}_{SA}\big\{p_i;\xi_i\big\},\label{system ancilla unitary}
 \end{align}
 where $U\in\mathcal{L}(\mathcal{H_S})$ and $\mathbb{I}\in\mathcal{L}(\mathcal{H_A})$ are unitary and identity operators, respectively.
As any PQ is real and bounded, for any given form we can always choose an equivalent form, by adding the lower bound, which would have the same monotonic or non monotonic nature, and also would be positive. We would therefore restrict ourselves to only positive PQ. Note that, distance measures like {\it p-norms}  \cite{pnorm} on qubit space, and general measures of information like  {\it order-$\alpha$} Renyi divergences  in any dimension for certain ranges of $\alpha$ \cite{renyi}, obey condition \ref{condition1}. We further sub-divide PQ according to $n$, i.e. number of elements present in the ensemble. We define special types of PQ, $\mathcal{I}_S^n$ and $\mathcal{I}_{SA}^n$, which are focused on ensembles of size $n$, in the following way: $\mathcal{I}_S^n(\mathcal{E}_S)=0$ and $\mathcal{I}_{SA}^n(\mathcal{E}_{SA})=0$, for any $\mathcal{E}_S\notin \mathcal{F}_S^n$ and $\mathcal{E}_{SA}\notin \mathcal{F}_{SA}^n$. Note that as $\mathcal{I}_S^n$ and $\mathcal{I}_{SA}^n$ are valid PQ, they obey condition \ref{condition1}. 

Observe that, different forms of quantifiers used in the literature, are defined on different  subsets of $\mathcal{F}_S^n$ or $\mathcal{F}_{SA}^n$, for various values of $n$. For example, GTD \cite{wise} is defined on all elements of $\mathcal{F}_S^2$, whereas distinguishability \cite{breuer} is defined only on those elements of $\mathcal{F}_S^2$ for which $p_1=p_2=1/2$. Likewise, quantum mutual information \cite{luo} is defined on all elements of $\mathcal{F}_{SA}^1$. To fit these  quantifiers in our formalism, we define compatible PQ  in each case, which are of the form $\mathcal{I}_S^n$ or $\mathcal{I}_{SA}^n$. For example for GTD, we define a PQ, $\mathcal{I}_S^{GTD}$, which takes the same value as GTD for elements in $\mathcal{F}_S^2$, and zero otherwise. Similarly for distinguishability, we define $\mathcal{I}_S^{BLP}$  such that, $\mathcal{I}_S^{BLP}\{p_1=1/2, p_2=1/2,\rho_1,\rho_2\}=||\rho_1-\rho_2||_1$ and $\mathcal{I}_S^{BLP}(\mathcal{E}_S)=0$, for $\mathcal{E}_S\notin\mathcal{F}_S^2$ or $p_i\neq1/2$ . Note
that BLP stands for Breuer, Laine, and Piilo, who were the first
to use distinguishability as a measure of non-Markovianity \cite{breuer}. In a similar way, we find that a large class of system-ancilla quantifiers considered so far \cite{luo,interferometric,heLQU} also correspond to PQ of the form $\mathcal{I}_{SA}^n$, for different values of $n$. In particular, $\mathcal{I}_S^2$ and $\mathcal{I}_{SA}^1$ corresponds to a large number of cases in the literature (see Table \ref{table1}). Refer to  Appendix \ref{appendix physicality}, for a detailed disposition of how each quantifier corresponds to PQ.

For a dynamical map $\Lambda_t$ we define {\it dynamic physicality quantifiers} $\Phi_t^{\mathcal{I}_S}$ and $\Phi_t^{\mathcal{I}_{SA}}$  based on $\mathcal{I}_S$ and $\mathcal{I}_{SA}$, in the following way,
\begin{align}
 \Phi_t^{\mathcal{I}_S}\big\{p_i;\rho_i\big\}&:=\mathcal{I}_S\big\{p_i;\Lambda_t[\rho_i]\big\},\label{system information quantifier}\\
 \Phi_t^{\mathcal{I}_{SA}}\big\{p_i;\xi_i\big\}&:=\mathcal{I}_{SA}\big\{p_i;(\Lambda_t\otimes I)[\xi_i]\big\}\label{system ancilla information quantifier}.
\end{align}
We now verify the perception, that non-monotonic decay of a PQ, rightly represents backflow of information from environment to the system. For any form of $\mathcal{I}_S$ or $\mathcal{I}_{SA}$, consider $\mathcal{I'}_S$  or $\mathcal{I'}_{SA}$ to be the same quantity defined on environment-system ensemble $\{p_i;\zeta^i_{ES}\}_i$ or environment-system-ancilla ensemble $\{p_i;\zeta^i_{ESA}\}_i$, respectively. As any open system dynamics is a result of unitary evolution $U_{ES}(t)$ of the system and environment considered jointly \cite{breuerbook}, we conclude from Eqs. (\ref{system unitary}) and (\ref{system ancilla unitary}), that $\Phi_t^{\mathcal{I'}_S}$ and $\Phi_t^{\mathcal{I'}_{SA}}$, defined in the sense of Eqs. (\ref{system information quantifier}) and (\ref{system ancilla information quantifier}), are constant in time. Now, consider $I^{\mathcal{I'}_S}_{env}(t)=\Phi_t^{\mathcal{I'}_S}-\Phi_t^{\mathcal{I}_S}$ or $I^{\mathcal{I'}_{SA}}_{env}(t)=\Phi_t^{\mathcal{I'}_{SA}}-\Phi_t^{\mathcal{I}_{SA}}$ to represent the information content of the environment and system (ancilla) combined, that cannot be obtained by knowing the system (ancilla) ensemble alone. Hence, we get $I^{\mathcal{I'}_S}_{env}(t)+\Phi_t^{\mathcal{I}_S}=constant$  and $I^{\mathcal{I'}_{SA}}_{env}(t)+\Phi_t^{\mathcal{I}_{SA}}=constant$, which means the net information content remains unchanged. Note, as partial tracing is always isomorphic to a CPTP map \cite{partialtrace}, from condition \ref{condition1}, we find $I^{\mathcal{I'}_S}_{env}(t)$ and $I^{\mathcal{I'}_{SA}}_{env}(t)$ are both positive quantities. Therefore, we conclude non-monotonic decay of $\Phi_t^{\mathcal{I}_S}$ and $\Phi_t^{\mathcal{I}_{SA}}$ rightly signifies
information backflow from environment to the system. Now we define Markovianity in terms of each valid form of PQ.

{\bf Definition 2.} A dynamical map $\Lambda_t$, is called {\it $\mathcal{I}_S$-Markovian} ({\it $\mathcal{I}_{SA}$-Markovian}) for some form of $\mathcal{I}_S$ ($\mathcal{I}_{SA}$), if  $\Phi_t^{\mathcal{I}_S}(\mathcal{E}_S)$ $\big(\Phi_t^{\mathcal{I}_{SA}}(\mathcal{E}_{SA})\big)$ decreases monotonically with time $t$, for any $\mathcal{E}_S\in\mathcal{F}_S$ ($\mathcal{E}_{SA}\in\mathcal{F}_{SA}$).

There are numerous examples in the literature for the above definition \cite{breuer,wise1,wise,udevi,fisher1,temporalsteering,luo,interferometric,heLQU}. We generalize the above notion in the following way,

{\bf Definition 3.} A dynamical map $\Lambda_t$ is called {\it $n$-S-Markovian} ({\it $n$-SA-Markovian}) if  $\Phi_t^{\mathcal{I}_S^n}(\mathcal{E}_S)$  $\big(\Phi_t^{\mathcal{I}_{SA}^n}(\mathcal{E}_{SA})\big)$ decreases in a monotonic fashion with time $t$, for any form of  $\mathcal{I}_S^n$ ($\mathcal{I}_{SA}^n$) and any choice of ensemble $\mathcal{E}_S\in\mathcal{F}_S^n$ ($\mathcal{E}_{SA}\in\mathcal{F}_{SA}^n$). 

We now give generalized definition of Markovianity for all PQ defined on system and system-ancilla. 

{\bf Definition 4.} A dynamical map $\Lambda_t$ is called {\it S-Markovian} ({\it SA-Markovian}) if  it is $n$-S-Markovian ($n$-SA-Markovian) for any value of $n$.

Finally, we give our generalized definition of Markovianity for backflow of information: any dynamics which is both S-Markovian and SA-Markovian, is called {\it IB-Markovian}.

\section{General properties of the formalism}
\label{General properties}
We first note a hierarchy within our Markovianity classes, which is apparent from their definition: any $n$-S-Markovian ($n$-SA-Markovian) class is a subset of $(n+1)$-S-Markovian ($(n+1)$-SA-Markovian) class. This observation provides a useful insight, that our formalism can be used to construct an infinite family of non-Markovianity measures, which would capture varied intensities of memory effects present in the dynamics. The higher the least value of $n$, for which a dynamics fails to be $n$-S-Markovian or $n$-SA-Markovian, the weaker is the effect of memory in the dynamics. Also note, as any PQ obeys condition \ref{condition1}, all IB-Markovian dynamics are automatically CPD. We now present a result, that makes SA-Markovianity an equivalent criteria to IB-Markovianity.
\begin{thm}
\label{thm0}
 If any dynamical maps $\Lambda_t$ is $n$-SA-Markovian, then it is  $n$-S-Markovian.
\end{thm}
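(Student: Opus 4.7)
The plan is to lift any system PQ to a system-ancilla PQ by combining it with a partial trace, and then exploit the product-state embedding $\rho \mapsto \rho \otimes \sigma$ to transfer monotonicity from the SA side to the S side. Concretely, given a PQ $\mathcal{I}_S^n$, I would define a candidate system-ancilla PQ $\tilde{\mathcal{I}}_{SA}^n$ by
\begin{equation}
\tilde{\mathcal{I}}_{SA}^n\{p_i;\xi_i\} := \mathcal{I}_S^n\{p_i;\mathrm{Tr}_A(\xi_i)\}
\end{equation}
whenever the ensemble has size $n$, and zero otherwise. The first step is to check that $\tilde{\mathcal{I}}_{SA}^n$ is an admissible PQ, i.e. it is real, bounded, vanishes outside $\mathcal{F}_{SA}^n$, and satisfies Condition \ref{condition1}. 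Realness, boundedness, and the vanishing property are inherited directly from $\mathcal{I}_S^n$, since partial trace preserves the ensemble size. For Condition \ref{condition1}, I would use the fact that $\mathrm{Tr}_A \circ (T \otimes I) = T \circ \mathrm{Tr}_A$ for any CPTP $T$ on the system, so
\begin{equation}
\tilde{\mathcal{I}}_{SA}^n\{p_i;(T\otimes I)[\xi_i]\} = \mathcal{I}_S^n\{p_i;T[\mathrm{Tr}_A(\xi_i)]\} \leq \tilde{\mathcal{I}}_{SA}^n\{p_i;\xi_i\},
\end{equation}
where the inequality is Condition \ref{condition1} applied to $\mathcal{I}_S^n$.

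The second step is the embedding. For any system ensemble $\{p_i;\rho_i\} \in \mathcal{F}_S^n$, pick an arbitrary fixed ancilla state $\sigma$ and form the system-ancilla ensemble $\{p_i;\rho_i \otimes \sigma\} \in \mathcal{F}_{SA}^n$. Using $(\Lambda_t \otimes I)[\rho_i \otimes \sigma] = \Lambda_t[\rho_i] \otimes \sigma$ and $\mathrm{Tr}_A(\Lambda_t[\rho_i] \otimes \sigma) = \Lambda_t[\rho_i]$, a direct substitution yields
\begin{equation}
\Phi_t^{\tilde{\mathcal{I}}_{SA}^n}\{p_i;\rho_i \otimes \sigma\} = \Phi_t^{\mathcal{I}_S^n}\{p_i;\rho_i\}
\end{equation}
for every $t$. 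Thus the two dynamic quantifiers are pointwise equal in $t$ on this embedded ensemble.

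Finally, the hypothesis of $n$-SA-Markovianity applied to the admissible PQ $\tilde{\mathcal{I}}_{SA}^n$ and the ensemble $\{p_i;\rho_i \otimes \sigma\}$ forces $t \mapsto \Phi_t^{\tilde{\mathcal{I}}_{SA}^n}\{p_i;\rho_i \otimes \sigma\}$ to be monotonically nonincreasing, and by the identity above so is $t \mapsto \Phi_t^{\mathcal{I}_S^n}\{p_i;\rho_i\}$. Since $\mathcal{I}_S^n$ and $\{p_i;\rho_i\}$ were arbitrary, this is exactly $n$-S-Markovianity. The only delicate point, and I expect the main sanity check rather than a real obstacle, is verifying that $\tilde{\mathcal{I}}_{SA}^n$ is genuinely of type $\mathcal{I}_{SA}^n$ (i.e. really vanishes off $\mathcal{F}_{SA}^n$ and really satisfies Condition \ref{condition1}); once this is settled, the rest of the argument is essentially a one-line identification through the product embedding.
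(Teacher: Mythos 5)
Your proposal is correct and follows essentially the same route as the paper: both lift the system quantifier $\mathcal{I}_S^n$ to a system-ancilla quantifier by precomposing with the partial trace, verify Condition~\ref{condition1} via $\mathrm{Tr}_A\circ(T\otimes I)=T\circ \mathrm{Tr}_A$, and identify the two dynamic quantifiers on any extension of the system ensemble. Your explicit product-state embedding $\rho_i\mapsto\rho_i\otimes\sigma$ is merely a concrete instance of the paper's choice of $\xi_i$ with $\mathrm{Tr}_A(\xi_i)=\rho_i$, so there is no substantive difference.
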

This result is expected, as any PQ on the system can be seen as a PQ on system-ancilla by choosing the system ensemble states to be reduced density matrices of the system-ancilla ensemble states. See Appendix \ref{appendix theorem1} for detailed proof. Figure \ref{markov} gives a concise representation of all the hierarchies present in our formalism. We now show how IB-Markovianity is related to CPD and in what way GTD on extended space plays a vital role in relating these two quantities. GTD is a quantity that gives the best possible distinguishing probability of a pair of quantum states, occurring with different probabilities \cite{nielsen_chuang}. Suppose, Alice prepares one of two states $\rho_1$ and $\rho_2$ with probabilities $p_1$ and $p_2$ and sends the ensemble to Bob. The best possible probability for him to distinguish between these two states with a single-shot experiment is given by the GTD of the ensemble. The corresponding PQ,  $\mathcal{I}_S^{GTD}$ is given by, 
\begin{figure}
  \includegraphics[width= 8.6 cm,height= 5.5 cm]{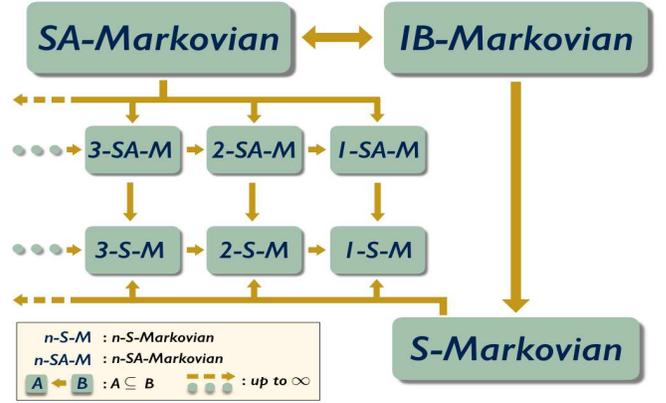}
  \caption{(colour online) Hierarchy of different classes of IB-Markovianity.}
  \label{markov}
 \end{figure}
\begin{equation}
\label{gtd quantifier}
 \mathcal{I}_S^{GTD}\{p_1,p_2,\rho_1,\rho_2\} := ||p_1\rho_1 - p_2\rho_2||_1,
\end{equation}
 where $||A||_1 = Tr\sqrt{A^{\dagger} A}$. It was first proposed in \cite{chrucomparison} and later in \cite{wise} that GTD can be used as a  quantifier to define Markovianity. The definition of GTD can also be easily extended to system-ancilla space. We call it {\it generalized trace-distance measure extended} (GTDE) and define it in the following way,
\begin{equation}
 \label{gtde quantifier}
 \mathcal{I}_{SA}^{GTDE}\{p_1,p_2,\xi_1,\xi_2\}:=||p_1\xi_1-p_2\xi_2||_1,
\end{equation}
where $\xi_i\in\mathcal{P}_+(\mathcal{H_S}\otimes\mathcal{H_A})$. Note that $\mathcal{I}_S^{GTD}$ and $\mathcal{I}_{SA}^{GTDE}$ are special forms of $\mathcal{I}_S^2$ and $\mathcal{I}_{SA}^2$, respectively.

{\bf Definition 5.} A dynamical map $\Lambda_t$ is called {\it GTD-Markovian} or {\it GTDE-Markovian} in the sense of definition $2$, if the respective PQ are $\mathcal{I}_S^{GTD}$ and $\mathcal{I}_{SA}^{GTDE}$.

Now we present one of the main theorems of this paper.
\begin{thm}
\label{thm1}
 For an invertible dynamical map $\Lambda_t$, the following are equivalent: $(i)$ $\Lambda_t$ is GTDE-Markovian, $(ii)$ $\Lambda_t$ is CPD, $(iii)$ $\Lambda_t$ is SA-Markovian or IB-Markovian, $(iv)$ $\Lambda_t$ is $2$-SA-Markovian.
\end{thm}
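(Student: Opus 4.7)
The plan is to close the cycle $(ii) \Rightarrow (iii) \Rightarrow (iv) \Rightarrow (i) \Rightarrow (ii)$. Three of these implications are bookkeeping; the mathematical content lies in $(i) \Rightarrow (ii)$. For $(ii) \Rightarrow (iii)$: if $\Lambda_t = V_{t,s}\Lambda_s$ with $V_{t,s}$ CPTP, then applying Condition \ref{condition1} to the CPTP map $V_{t,s}$ gives
\begin{equation*}
\Phi_t^{\mathcal{I}_{SA}}\big\{p_i;\xi_i\big\} = \mathcal{I}_{SA}\big\{p_i;(V_{t,s}\otimes I)(\Lambda_s\otimes I)\xi_i\big\} \leq \mathcal{I}_{SA}\big\{p_i;(\Lambda_s\otimes I)\xi_i\big\} = \Phi_s^{\mathcal{I}_{SA}}\big\{p_i;\xi_i\big\},
\end{equation*}
and identically on the system alone, so $\Lambda_t$ is IB-Markovian. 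Then $(iii) \Rightarrow (iv)$ is immediate since SA-Markovianity means $n$-SA-Markovianity for every $n$, and $(iv) \Rightarrow (i)$ holds because $\mathcal{I}_{SA}^{GTDE}$ is a special form of $\mathcal{I}_{SA}^2$.

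For the hard direction $(i) \Rightarrow (ii)$, set $V_{t,s} := \Lambda_t\Lambda_s^{-1}$, well-defined by invertibility of $\Lambda_s$ and trace preserving because $\Lambda_s^{-1}$ inherits trace preservation from $\Lambda_s$. The strategy is to recast GTDE-Markovianity as a trace-norm inequality and then invoke a standard CP characterization. Any Hermitian $A\in\mathcal{L}(\mathcal{H_S}\otimes\mathcal{H_A})$ admits a Jordan decomposition $A=A_+-A_-$ with $A_\pm\geq 0$ orthogonal; choosing $p_i=\Tr(A_\pm)/\|A\|_1$ and $\xi_i=A_\pm/\Tr(A_\pm)$ (the degenerate cases $A_\pm=0$ are settled directly by trace preservation), hypothesis $(i)$ rescales homogeneously to
\begin{equation*}
\|(\Lambda_t\otimes I)A\|_1 \leq \|(\Lambda_s\otimes I)A\|_1 \quad \text{for every Hermitian } A.
\end{equation*}
Substituting $B:=(\Lambda_s\otimes I)A$ and using that $\Lambda_s\otimes I$ is a Hermiticity-preserving linear bijection on $\mathcal{L}(\mathcal{H_S}\otimes\mathcal{H_A})$, this reduces to $\|(V_{t,s}\otimes I)B\|_1\leq\|B\|_1$ for every Hermitian $B$. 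One then invokes the standard fact that a trace-preserving map $V$ on $\mathcal{L}(\mathcal{H_S})$ is CP iff $V\otimes I_{\mathcal{H_A}}$ is trace-norm contractive on Hermitians (an ancilla of the system's dimension is enough for CP): for $B\geq 0$ we have $\Tr((V_{t,s}\otimes I)B)=\Tr(B)=\|B\|_1$ by trace preservation, while $\|(V_{t,s}\otimes I)B\|_1\leq\|B\|_1$, which forces $(V_{t,s}\otimes I)B$ to have no negative eigenvalue. Hence $V_{t,s}$ is CP for every pair $s<t$, i.e., $\Lambda_t$ is CPD.

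The main obstacle I expect is the Jordan-decomposition bookkeeping in the first reduction: one must check that the probability weights and states $(p_1,p_2,\xi_1,\xi_2)$ admissible in the definition of GTDE genuinely sweep out every Hermitian $A$ up to positive rescaling, with clean handling of the boundary cases $A_+=0$ or $A_-=0$, together with locating the trace-norm characterization of CP in exactly the form convenient for this setting. Everything else is a straight chain of linear-algebraic rewrites.
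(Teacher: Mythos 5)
Your proposal is correct and follows essentially the same route as the paper: the same cycle of implications, with the content concentrated in $(i)\Rightarrow(ii)$, where GTDE-Markovianity is converted via the Jordan/Helstrom decomposition and invertibility of $\Lambda_s\otimes I$ into trace-norm contractivity of $V_{t,s}\otimes I$ on Hermitian operators, and then positivity (hence complete positivity of $V_{t,s}$) is extracted from trace preservation. The only cosmetic difference is that you prove the Kossakowski-type step (TP $+$ trace-norm contraction $\Rightarrow$ positivity) inline, whereas the paper cites it.
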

The non-intuitive part of the theorem is $(i)\implies(ii)$, which can be easily deduced by using a result by Kossakowski \cite{kossakowski1972quantum}, where TP contraction of trace-norm was shown to be the necessary and sufficient criteria for  positivity of maps (see Appendix \ref{appendix theorem2}). Thus we see the GTDE criteria is not only necessary but also sufficient for the whole of IB-Markovianity class, for invertible dynamics. 

\section{For the qubit case}
\label{qubit case}
We now show that for  qubit dynamics the GTD criteria is an equivalent criteria for $2$-S-Markovianity. From theorem \ref{thm0}, this implies GTD is also a sufficient criteria for $1$-S-Markovianity. Therefore, in qubit dynamics GTD serves as a sufficient criteria of Markovianity for quantifiers defined on one or two states of the system, namely, distinguishability \cite{breuer}, fidelity \cite{udevi}, quantum fisher information \cite{fisher1}, etc. The existence of this result is due to Alberti and Uhlmann \cite{alberti} and later also by Chefles et. al \cite{chefles} and Huang et. al. \cite{huang}, who showed that TP contractivity of trace-norm is necessary and sufficient condition for existence of physical transformations between two pairs of qubit states. The following lemma, and consequently the theorem can be easily deduced by applying this result (see Appendix \ref{appendix lemma1}). For this section of the paper, we assume $\mathcal{H_S}$ and $\mathcal{H_A}$ to be a qubit system and qubit ancilla. 
\begin{lem}
\label{lem1}
 If a qubit dynamical map $\Lambda_t$ is GTD-Markovian, then for any $t>s$ and any collection of states $\rho_1,\rho_2,\sigma_1,\sigma_2\in\mathcal{P}_+(\mathcal{H_S})$ such that $\rho_i = \Lambda_s[\sigma_i]~;~i=1,2$, there exists a CPTP map $T_{12}\in\mathcal{T}(\mathcal{H_S},\mathcal{H_S})$ such that,
 \begin{equation}
 \label{alberti condition}
  V_{t,s}[\rho_i] = T_{12}[\rho_i]~;~i=1,2.
 \end{equation}
\end{lem}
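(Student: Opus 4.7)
The plan is to reduce the claim to the Alberti--Uhlmann characterization, as refined by Chefles et al.\ and Huang et al., which in the qubit setting states that a CPTP map carrying one pair of states $\{\rho_1,\rho_2\}$ to another pair $\{\tau_1,\tau_2\}$ exists if and only if the weighted trace norm contracts, i.e.
\begin{equation*}
\|p\,\rho_1 - (1-p)\,\rho_2\|_1 \;\geq\; \|p\,\tau_1 - (1-p)\,\tau_2\|_1
\end{equation*}
for every $p\in[0,1]$. Hence it suffices to produce this family of inequalities with $\tau_i:=V_{t,s}[\rho_i]$.

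First I would translate GTD-Markovianity into a monotonicity statement about $V_{t,s}$. Applying Definition 5 to the two-element ensemble $\{p,1-p;\sigma_1,\sigma_2\}$, the function $u\mapsto \|p\,\Lambda_u[\sigma_1]-(1-p)\,\Lambda_u[\sigma_2]\|_1$ is non-increasing in $u$. Evaluating at $u=s$ and $u=t$, and using the divisibility relation $\Lambda_t = V_{t,s}\Lambda_s$ together with invertibility of $\Lambda_s$ so that $\Lambda_t[\sigma_i] = V_{t,s}[\rho_i]$, I obtain
\begin{equation*}
\|p\,V_{t,s}[\rho_1]-(1-p)\,V_{t,s}[\rho_2]\|_1 \;\leq\; \|p\,\rho_1-(1-p)\,\rho_2\|_1
\end{equation*}
for every $p\in[0,1]$. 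Setting $\tau_i:=V_{t,s}[\rho_i]$, this is precisely the sufficient hypothesis of the qubit Alberti--Uhlmann/Chefles/Huang theorem, which then furnishes a CPTP map $T_{12}\in\mathcal{T}(\mathcal{H_S},\mathcal{H_S})$ satisfying $T_{12}[\rho_i]=V_{t,s}[\rho_i]$ for $i=1,2$, i.e.\ exactly \eqref{alberti condition}.

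The main obstacle, and the reason the lemma is restricted to qubits, is the use of the Alberti--Uhlmann type result itself: in higher dimensions, contractivity of weighted trace norms on every two-element ensemble is known to be insufficient to guarantee the existence of an interpolating CPTP map between two specified pairs. A smaller but necessary point to verify is that invertibility of $\Lambda_t$, assumed throughout the paper, is what legitimizes writing $V_{t,s}[\rho_i]=\Lambda_t[\sigma_i]$ unambiguously; the hypothesis $\rho_i=\Lambda_s[\sigma_i]$ in the lemma is exactly the device that couples the ensemble at time $s$ to its image at time $t$ under the intermediate map, so that the monotonicity of GTD along the orbit can be pulled back to an inequality purely in terms of $V_{t,s}$ acting on $\{\rho_1,\rho_2\}$.
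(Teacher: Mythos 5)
Your proposal is correct and follows essentially the same route as the paper's own proof: both reduce the claim to the Alberti--Uhlmann/Huang characterization of when a CPTP map connecting two pairs of qubit states exists, and both obtain the required weighted trace-norm contraction by applying GTD-monotonicity to the ensemble $\{p_1,p_2;\sigma_1,\sigma_2\}$ and using divisibility to rewrite $\Lambda_t[\sigma_i]$ as $V_{t,s}[\rho_i]$. The only cosmetic difference is that the paper parametrizes the Helstrom condition by $x=p_2/p_1\geq 0$ rather than by $p\in[0,1]$, and justifies the existence of $V_{t,s}$ via the divisibility consequence of GTD-Markovianity (Proposition 2 of the cited Chru\'sci\'nski et al.\ work) rather than via invertibility alone.
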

\begin{thm}
\label{thm2}
 A qubit dynamical map $\Lambda_t$ is GTD-Markovian if and only if it is $2$-S-Markovian.
\end{thm}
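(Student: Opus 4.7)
The plan is to recognize that one direction is immediate from the definitions, while the other direction reduces cleanly to Lemma \ref{lem1}.

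For the easy direction ($2$-S-Markovian $\Rightarrow$ GTD-Markovian), the GTD functional $\mathcal{I}_S^{GTD}\{p_1,p_2,\rho_1,\rho_2\}=||p_1\rho_1-p_2\rho_2||_1$ is known, via contractivity of the trace-norm under CPTP maps, to satisfy Condition \ref{condition1}, hence is itself a valid form of $\mathcal{I}_S^2$. Consequently, if every PQ of type $\mathcal{I}_S^2$ decreases monotonically along the dynamics, so does $\mathcal{I}_S^{GTD}$, which is precisely GTD-Markovianity.

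For the nontrivial direction, assume $\Lambda_t$ is GTD-Markovian. Fix an arbitrary form $\mathcal{I}_S^2$, an ensemble $\{p_1,p_2,\sigma_1,\sigma_2\}\in\mathcal{F}_S^2$, and times $t>s$. Put $\rho_i:=\Lambda_s[\sigma_i]$, so that $\Lambda_t[\sigma_i]=V_{t,s}[\rho_i]$. Lemma \ref{lem1} then supplies a CPTP map $T_{12}$, depending on $s,t,\rho_1,\rho_2$ but not on the weights $p_i$, satisfying $T_{12}[\rho_i]=V_{t,s}[\rho_i]=\Lambda_t[\sigma_i]$ for $i=1,2$. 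Invoking Condition \ref{condition1} with $T_{12}$ applied to the ensemble $\{p_i;\rho_i\}$ then yields
\begin{align*}
\Phi_t^{\mathcal{I}_S^2}\{p_i;\sigma_i\} &= \mathcal{I}_S^2\{p_i;T_{12}[\rho_i]\}\\
&\leq \mathcal{I}_S^2\{p_i;\rho_i\} = \Phi_s^{\mathcal{I}_S^2}\{p_i;\sigma_i\},
\end{align*}
which is exactly the monotonic decrease required for $2$-S-Markovianity. Since the form of $\mathcal{I}_S^2$ and the ensemble were arbitrary, the implication follows.

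The substantive work has already been absorbed into Lemma \ref{lem1}: the delicate ingredient is the Alberti-Uhlmann (and Chefles, Huang) characterization that, for qubits, trace-preserving contraction of the trace-norm across all probability mixtures is equivalent to the existence of a CPTP interpolant between two pairs of states. Once such a $T_{12}$ is in hand, the remainder is a direct invocation of Condition \ref{condition1}. Two points deserve explicit attention: first, the CPTP map $T_{12}$ depends on the two states but not on the probability weights, so a single $T_{12}$ works uniformly for every choice of $(p_1,p_2)$; second, invertibility of $\Lambda_t$ is not needed here, since $V_{t,s}$ is used only through its action on $\rho_i=\Lambda_s[\sigma_i]$, which is unambiguously $\Lambda_t[\sigma_i]$ regardless of whether $V_{t,s}$ is uniquely defined as a global map.
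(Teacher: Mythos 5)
Your proof is correct and follows essentially the same route as the paper's: the forward direction reduces to Lemma \ref{lem1} plus Condition \ref{condition1} applied to the interpolating CPTP map $T_{12}$, and the converse is immediate since $\mathcal{I}_S^{GTD}$ is itself a PQ of the form $\mathcal{I}_S^2$. Your added remarks (independence of $T_{12}$ from the weights, and that only divisibility through $\Lambda_s[\sigma_i]$ rather than invertibility is used) are accurate clarifications of points the paper leaves implicit.
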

\begin{proof}
 As $\Lambda_t$ is GTD-Markovian, it can be expressed as Eq. (\ref{cpd}) (see Proposition 2 of \cite{chruscinski2017universal}). For any form of physicality quantifier $\mathcal{I}_S^2$, let us choose any $\rho_1,\rho_2,\sigma_1,\sigma_2\in\mathcal{P}_+(\mathcal{H_S})$ and $t>s$, such that $\rho_i = \Lambda_s[\sigma_i]$ for $i=1,2$. As $\Lambda_t$ is GTD-Markovian, using lemma \ref{lem1} we get $\mathcal{I}_S^2\{p_1,p_2,V_{t,s}[\rho_1],V_{t,s}[\rho_2]\}=\mathcal{I}_S^2\{p_1,p_2,T_{12}[\rho_1],T_{12}[\rho_2]\}$. Note, as $T_{12}$ is CPTP and $\mathcal{I}_S^2$ obeys Condition \ref{condition1}, this implies $\Phi_t^{\mathcal{I}_S^2}\leq\Phi_s^{\mathcal{I}_S^2}$ for any initial ensemble $\{p_1,p_2,\sigma_1,\sigma_2\}$ and $t>s$. Hence, we conclude if $\Lambda_t$ is GTD-Markovian then it is $2$-S-Markovian. The converse statement is easy to prove, as $\mathcal{I}_S^{GTD}$ is a physicality quantifier of the form $\mathcal{I}_S^2$. 
\end{proof}
It was shown in \cite{wise}, that GTD is an equivalent criteria to P-divisibility for invertible dynamics. Therefore, the above theorem shows that P-divisibility is equivalent to $2$-S-Markovianity for invertible qubit dynamics. As a detailed algorithm for constructing $T_{12}$ of lemma \ref{lem1} is already given in theorem 2.1 and 2.2 of \cite{huang}, we do not give any example, explaining the construction.

\section{Applications of the formalism}
\label{applications}
 \subsection{Minimum strength of Non-Markovianity required, to be used as a resource: Case studies}
 A number of protocols have been suggested, where backflow of information in a non-Markovian process has been used as a resource to enhance the efficiency of the protocol. All these protocols require different minimum strengths of non-Markovianity to enable the enhancement of efficiency. We now present two such scenarios and in each case we identify the minimum strength of non-Markovianity required.
 
 \subsubsection{Preserving channel capacities}
Non-Markovianity was used by Bylicka et al \cite{channelcapacity,rhp_review} to preserve {\it channel capacity} over long channels. They used the fact that {\it classical} and {\it quantum channel capacities}, given by $C_c$ and $C_q$, show a non-monotonic decay over time, whenever the dynamical map is non-Markovian.
 \begin{align}
  C_c[\Lambda_t]&=\sup_{\rho} I(\rho,\Lambda_t),\\
  C_q[\Lambda_t]&=\sup_{\rho} I_c(\rho,\Lambda_t),
 \end{align}
 where $I(\rho,\Lambda_t)$ is the quantum mutual information between the initial and the time evolved state of the system, defined in the following way,
 \begin{equation}
  I(\rho,\Lambda_t):=S(\rho)+I_c(\rho,\Lambda_t).
 \end{equation}
Here $S(\rho)=-\rho\log\rho$ is the von Neumann entropy and $I_c$ is the quantum coherent information, given by,
\begin{equation}
 I_c(\rho,\Lambda_t)=S\big(\Lambda_t[\rho]\big)-S\big((\Lambda_t\otimes I)\big[\ket{\Psi}\bra{\Psi}\big]\big),
\end{equation}
where $\rho=Tr_A\big(\ket{\Psi}\bra{\Psi}\big)$ and $\ket{\Psi}$ is a purification of $\rho$ in a higher dimensional system-ancilla space with ancilla dimension same as the system. It can be easily seen that $I_c$ is a PQ of the form $\mathcal{I}_S^1$. {\it Therefore, in order to have a revival of channel capacities, the dynamics should be non-Markovian at least with respect to 1-S-Markovianity class.} Thus we see a very strong form of non-Markovianity is necessary to use it as a resource in preserving channel capacities.

 \subsubsection{As a thermodynamic resource}
 Recently, Bylicka et al \cite{bylicka2016thermodynamic} used a non-Markovian dynamics to obtain revival of extractable work from an $n-$qubit system. The main ingredient behind this result is non-increasing nature of quantum mutual information between system and ancilla, under action of arbitrary CPTP map on one side of system or ancilla. {\it Hence, we conclude that in order to obtain a revival of extractable work, the dynamical process must be non-Markovian at least with respect to 1-SA-Markovianity class.}  
 
\subsection{Relation to the problem of existence of physical transformations between states}
 The problem of whether there exists a physical transformation between two sets of quantum states, is a well researched topic with various partial and complete results available \cite{alberti,huang,heinosaari,chefles,gour2017}. We will follow the notation in \cite{chefles} and denote the existence of physical transformation between two sets of quantum states, each containing $n$ elements, by $\{\rho_1,\rho_2,\dots,\rho_n\}\implies\{\sigma_1,\sigma_2\dots,\sigma_n\}$. Formally speaking, this means there exists a CPTP map $T$ connecting them i.e. $T[\rho_i]=\sigma_i$ for all $i=1,\dots,n$. In \cite{heinosaari}, it was shown that this problem can be reformulated as a semidefinite programming problem and thus using convex optimization techniques \cite{boydconvex}, it  can be checked algorithmically. For any given dynamics $\Lambda_t$, it can be easily shown (in a similar way as in Theorem \ref{thm2}) that, if $\{\rho_1(s),\rho_2(s),\dots,\rho_n(s)\}\implies\{\rho_1(t),\rho_2(t),\dots,\rho_n(t)\}$ for any $t>s$ and $\{\rho_i\}_{i=1}^n$, then $\Lambda_t$ is n-S-Markovian. Note, here $\rho_i(t)=\Lambda_t[\rho_i]$ represents the time evolved states.
 
 \subsection{A family of new non-Markovianity measures}
 As this formalism provides a general structure for constructing PQ, it is expected that a number of new PQ will emerge, that was previously unknown to the literature. Any such PQ can be used to device a non-Markovianity measure in the following way,
 \begin{equation}
  \mathcal{N}_{\mathcal{I}_S}(\Lambda_t):=\int_{\frac{d\big(\Phi_t^{\mathcal{I}_S}\big)}{dt}>0}\frac{d\big(\Phi_t^{\mathcal{I}_S}\big)}{dt}dt.
 \end{equation}
 Here $\Phi_t^{\mathcal{I}_S}$ is, as defined in Eq. (\ref{system information quantifier}). Note, positive time derivative of $\Phi_t^{\mathcal{I}_S}$ implies, departure from monotonic decay of the quantity over time. Similarly for PQ's of the form $\mathcal{I}_{SA}$, we can define non-Markovianity measures in the same way. Thus our formalism provides a platform for an infinite family of non-Markovianity measures. For example, in \cite{petz} a family of new metrics $g_D(A,B)$ on the space of linear operators of finite dimension were suggested, which are monotonic (decreasing) under stochastic (CPTP) maps i.e. $g_D(T[A],T[A])\leq g_D(A,A)$, for any CPTP map $T$, any operator $A$ and positive operator $D$. See \cite{petz} for more details about $g_D(A,B)$. Any such metric can used to device a new PQ of the form,
 \begin{equation}
 \mathcal{I}^{g_D}_S\{\rho\}=g_D(\rho,\rho). 
 \end{equation}
 Note this is a PQ of the form $\mathcal{I}^1_S$. Also, \cite{pnorm} presents a collection of norms, the $p-$norms, that can used as PQ on qubit space in the following way,
 \begin{align}
  \mathcal{I}_S^{p-norm}\{q_1,q_2,\rho_1,\rho_2\} &= ||\rho_1 - \rho_2||_p;~q_1=q_2=1/2\nonumber\\
  &=0~~~~~~~~~~~~~;~ q_i\neq 1/2,
 \end{align}
 where $||A||_p=[Tr(A^{\dagger}A)^{p/2}]^{1/p}$, $p\geq1$, $\rho_1,\rho_2\in\mathcal{P}_+(\mathcal{H}_S)$ with $\mathcal{H}_S=\mathbb{C}^2$ and $q_1,q_2$ are probabilities. Note, $p=1$ gives our usual trace-norm and we have, $\mathcal{I}_S^{1-norm}=\mathcal{I}_S^{BLP}$. We now combine two forms of $p-$norms to define a new PQ,
 \begin{equation}
 \label{p1-p2}
  \mathcal{I}^{p_1-p_2}_S(\mathcal{E}_S):=q_1\mathcal{I}_S^{p_1-norm}(\mathcal{E}_S')+ q_2\mathcal{I}_{S}^{p_2-norm}(\mathcal{E}_S')
  \end{equation}
 where $\mathcal{E}_S=\{q_1,q_2,\rho_1,\rho_2\}$ and $\mathcal{E}_S'=\{1/2,1/2,\rho_1,\rho_2\}$.  Note that $p-$norms defined on qubit space obey condition \ref{condition1} and hence $\mathcal{I}_{S}^{p-norm}$ and $\mathcal{I}_{S}^{p_1-p_2}$ qualify as PQ of the form $\mathcal{I}^2_S$.
 
To present an example, we consider the  the random unitary dynamics $\Lambda_t$ on a qubit system, which has been extensively studied in the literature. We follow the notations and results presented in \cite{chru_random} to test our new PQ. The random unitary dynamics is given by,
 \begin{equation}
 \label{random_unitary}
  \Lambda_t[\rho]=\sum_{\alpha=0}^3r_{\alpha}(t)\sigma_{\alpha}\rho\sigma_{\alpha},
 \end{equation}
 where $r_{\alpha}(t)$ are time dependent probabilities with $r_0(0)=1$, $\sigma_0=\mathbb{I}$ and $\sigma_1,\sigma_2$ and $\sigma_3$ are the Pauli spin matrices. Also note, the Pauli matrices happen to be the eigenvectors of $\Lambda_t$ with time-dependent eigenvalues $\lambda_i(t)$, i.e.  $ \Lambda_t[\sigma_i]=\lambda_i(t)\sigma_i$ for $i=0,\dots,3$. We find, $\lambda_i=\sum_{j=1}^4H_{ij}r_j$, where $H$ is the Hadamard matrix given by, 
 \begin{displaymath}
  H=\left(\begin{array}{cccc}
         1 & 1 & 1 & 1\\
         1 & 1 & -1 & -1\\
         1 & -1 & 1 & -1\\
         1 & -1 & -1 & 1\\
        \end{array}\right).
 \end{displaymath}
 Note $\lambda_0(t)=1$ for all $t$. The master equation of this dynamics is given by,
 \begin{equation}
  \frac{d}{dt}\rho_t=\sum^3_{i=0}\gamma_i(t)\sigma_i\rho_t\sigma_i,
 \end{equation}
 where $\gamma_i=\frac{1}{4}\sum_{j=0}^3H_{ij}\frac{\dot{\lambda}_j(t)}{\lambda_j(t)}$ for all $i=0,\dots,3$. This readily implies,
 \begin{equation}
    \sum_{i=0}^3\gamma_i(t)=0.
 \end{equation} 
 Note that, this identity simplifies the above master equation to the following form,
\begin{equation}
  \frac{d}{dt}\rho_t=\sum^3_{i=1}\gamma_i(t)(\sigma_i\rho_t\sigma_i-\rho_t),
 \end{equation} 
 Also, we find the following,
 \begin{align}
  \lambda_1(t)&=e^{-2[\Gamma_2(t)+\Gamma_3(t)]},\\
  \lambda_2(t)&=e^{-2[\Gamma_1(t)+\Gamma_3(t)]},\\
  \lambda_3(t)&=e^{-2[\Gamma_1(t)+\Gamma_2(t)]},
 \end{align}
 where $\Gamma_k(t)=\int_0^t\gamma_k(\tau)d\tau$, for $k=1,2,3$. Now, consider the ensembles $\mathcal{E}_S$ and $\mathcal{E}'_S$, used in Eq. (\ref{p1-p2}).  As $\rho_1 - \rho_2$ is a traceless hermitian operator, we get $\rho_1 - \rho_2=\sum_{k=1}^3x_k\sigma_k$, where $x_1,x_2$ and $x_3$ are real numbers. Thus we have,
 \begin{equation}
  ||\Lambda_t[\rho_1 - \rho_2]||_p=2^{1/p} \eta(t),
 \end{equation}
 where $\eta(t)=\sqrt{\sum^3_{k=1}\lambda_k(t)^2x_k^2}$. This implies,
 \begin{equation}
  \frac{d}{dt}||\Lambda_t[\rho_1 - \rho_2]||_p=\frac{2^{1/p-1}}{\eta(t)}\sum^3_{k=1}x_k^2\frac{d}{dt}|\lambda_k(t)|^2.
 \end{equation}
 Thus we have,
 \begin{align}
  \frac{d\Big(\Phi_t^{\mathcal{I}^{p_1-p_2}_S}(\mathcal{E}_S)\Big)}{dt}&=\Big[q_12^{1/p_1-1}+q_22^{1/p_2-1}\Big]\nonumber\\
  &\times\frac{d\Big(\Phi_t^{\mathcal{I}^{BLP}_S}(\mathcal{E}_S')\Big)}{dt},
 \end{align}
 where,
 \begin{equation}
   \frac{d\Big(\Phi_t^{\mathcal{I}^{BLP}_S}(\mathcal{E}_S')\Big)}{dt}=\frac{1}{\eta(t)}\sum^3_{k=1}x_k^2\frac{d}{dt}|\lambda_k(t)|^2. 
 \end{equation}
 So we see $\mathcal{I}^{p_1-p_2}_S$ witnesses non-Markovianity whenever $\mathcal{I}^{BLP}_S$ witnesses the same, and vice-versa. \\ 
 
 {\it Example 1.} On choosing  $\gamma_1(t)=\gamma_2(t)=1$ and $\gamma_3(t)=\sin t$, we get $\Gamma_1(t)=\Gamma_2(t)=t$ and $\Gamma_3(t)=1-\cos t$. So we have,
 \begin{align}
  \lambda_1(t)&=e^{-2(1+t-\cos t)},\\
  \lambda_2(t)&=e^{-2(1+t-\cos t)},\\
  \lambda_3(t)&=e^{-4t}.
 \end{align}
 Thus we get,
 \begin{align}
  \Phi_t^{\mathcal{I}^{p_1-p_2}_S}&(\mathcal{E}_S)=\big(q_12^{1/p_1-1}+q_22^{1/p_2-1}\big)\nonumber\\
  &\times\big[(x_1^2+x_2^2)~e^{-4(1+t-\cos t)}+x_3^2~e^{-8t}\big]^{\frac{1}{2}}
 \end{align}
 Since $q_1,q_2,p_1$ and $p_2$ are all positive and $x_1,x_2$ and $x_3$ are real, it can be easily seen that the above function is monotonically decreasing with $t$. Hence, we conclude the above dynamics is $\mathcal{I}^{p_1-p_2}_S-$Markovian for any $p_1,p_2\geq1$. Also, note this dynamics is not CPD in general, as $\gamma_3(t)$ can take negative values \cite{chru_random}. \\

 {\it Example 2.} Choose $r_1(t)=r_2(t)=\frac{1-r_0(t)}{4}$ and $r_3(t)=\frac{1-r_0(t)}{2}$. Therefore $\lambda_1(t)=\lambda_2(t)=\frac{3r_0(t)-1}{2}$ and $\lambda_3(t)=r_0(t)$. Also, $\gamma_1(t)=\gamma_2(t)=-\frac{\dot{r}_0(t)}{4r_0(t)}$ and $\gamma_3(t)=-\frac{(3r_0(t)+1)}{4r_0(t)}\frac{\dot{r}_0(t)}{(3r_0(t)-1)}$. So, we have,
 
 \begin{figure}
  \includegraphics[width= 8.6 cm,height= 5.5 cm]{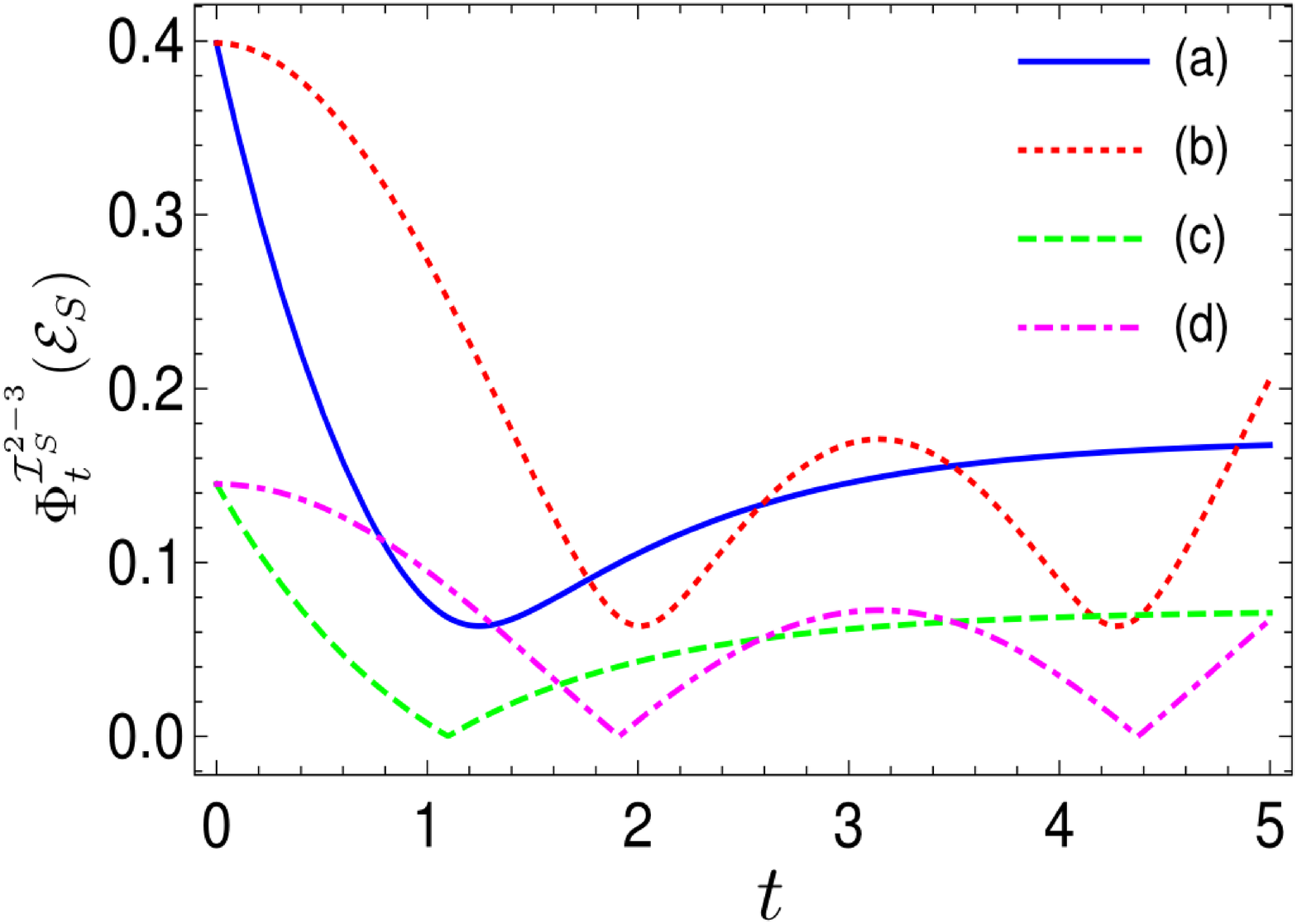}
  \caption{(colour online) Plot of dynamic PQ vs time for random unitary dynamics with $r_1(t)=r_2(t)=\frac{1-r_0(t)}{4}$ and $r_3(t)=\frac{1-r_0(t)}{2}$, for different initial ensembles $\mathcal{E}_S$ and different functional forms of $r_0(t)$. We consider $\mathcal{E}_S=\{q_1,q_2,\rho_1,\rho_2\}$, where $\rho_i=\frac{1}{2}(\sigma_0+\sum_{k=1}^3n^i_k\sigma_k)$ and $n^i=(s_i\sin\theta_i\cos \phi_i~,~s_i\sin\theta_i\sin \phi_i~,~s_i\cos\theta_i)$. The different cases considered here are: (a) $r_0(t)=e^{-t}$, $q_1=0.7,q_2=0.3$, $s_1=1,\theta_1=\phi_1=\pi/2$, $s_2=0.6,\theta_1=\pi,\phi_1=0$; (b) $r_0(t)=\frac{1+\cos t}{2}$, $q_1=0.7,q_2=0.3$, $s_1=1,\theta_1=\phi_1=\pi/2$, $s_2=0.6,\theta_1=\pi,\phi_1=0$; (c) $r_0(t)=e^{-t}$, $q_1=0.3,q_2=0.7$, $s_1=0.7,\theta_1=2\pi/3,\phi_1=\pi/6$, $s_2=0.4,\theta_1=5\pi/6,\phi_1=\pi/3$; (d) $r_0(t)=\frac{1+\cos t}{2}$, $q_1=0.3,q_2=0.7$, $s_1=0.7,\theta_1=2\pi/3,\phi_1=\pi/6$, $s_2=0.4,\theta_1=5\pi/6,\phi_1=\pi/3$.}
  \label{random}
 \end{figure}
 
 \begin{align}
  \Phi_t^{\mathcal{I}^{p_1-p_2}_S}&(\mathcal{E}_S)=\big(q_12^{1/p_1-1}+q_22^{1/p_2-1}\big)\nonumber\\
  &\times\Big[\frac{1}{4}\big(x_1^2+x_2^2\big)\big(3r_0(t)-1\big)^2+x_3^2~r_0(t)^2\Big].
 \end{align}
 Figure \ref{random} shows time evolution of $\Phi_t^{\mathcal{I}^{2-3}_S}$ for different initial ensembles and two different forms of $r_0(t)$: $e^{-t}$ and $\frac{1+\cos t}{2}$. We find in both cases the dynamics is $\mathcal{I}^{2-3}_S-$ non-Markovian.
 
\section{discussions}
\label{discussions}

{\it Incoherent and Unital dynamics.}$-$
A number of measures of non-Markovianity were suggested for incoherent and unital dynamics \cite{l1norm,he2017non,rentropa}, which are not non-increasing under arbitrary CPTP maps. Therefore, they are not guaranteed to be monotonically decreasing under arbitrary CPD dynamics, which are well accepted Markovian dynamics. Hence, in spite of them being useful non-Markovianity measures for certain types of dynamics, we do not consider them as appropriate quantifiers for describing Markovianity, in general. 

{\it Non-invertible dynamics.}$-$
As mentioned earlier, the form of $V_{t,s}$ is not unique when $\Lambda_t$ is non-invertible. Naturally, for any given non-invertible and divisible dynamics, different forms of  $V_{t,s}$ would correspond to different physical realizations of the dynamics. Recently, it was shown in \cite{chruscinski2017universal} that if a (also non-invertible) dynamics is GTDE-Markovian, there exists at least one form of $V_{t,s}$, and consequently a physical realization of the dynamics, that would be CPD. Therefore, it can be inferred that at least one realization of every non-invertible dynamics, obeys theorem \ref{thm1}. 

Recently, it was brought to our notice, that there has been attempts to characterize information flux for defining Markovianity, from a thermodynamic perspective \citep{chen2017}. The motivation behind this approach is similar to ours, i.e. to reconcile the conceptual difficulty arising from numerous non-equivalent definitions of IB.

\section{Conclusion}
\label{conclusion}
In this paper, we have provided a generalized formalism for describing the IB approach of Markovianity. We provided a general form of a quantifier, called the physicality quantifier, whose monotonic decay with time was seen as the defining criteria for Markovianity. We defined the physicality quantifier, to be any real bounded function on the ensemble space, that is non-increasing under CPTP maps. In doing so, we found that a large number of prescriptions for IB-Markovianity in the literature, come as special cases of our formalism. Also, by using our formalism we showed that for invertible dynamics, IB-Markovianity is equivalent to CP-divisibility, as well as to Markovianity with respect to generalized trace-distance measure in extended system-ancilla space.

We showed hierarchies of different subclasses of our formalism and argued that it can be used to construct an infinite family of non-Markovianity measures, which would capture varied strengths of memory effects present in the dynamics. We also used the formalism to show that generalized trace-distance measure for qubit dynamics, serve as sufficient criteria of IB-Markovianity for a number of prescriptions suggested earlier. Finally, we presented certain applications of our formalism.
We expect our formalism will shed light into further understanding of physical and mathematical structure of quantum Markovianity and enhance its applicability to more varied scenarios. 

\begin{acknowledgments}
The author would like to thank Francesco Buscemi for useful suggestions on the paper.  SC is also thankful to Sibasish Ghosh and Bassano Vacchini for useful discussions on non-Markovianity and Suchetana Goswami, Arindam Mallick and Arpan Das for reading the manuscript and suggesting appropriate alterations.
\end{acknowledgments}

\appendix
\section{Physicality quantifiers considered so far in Literature}
\label{appendix physicality}
Most of the quantifiers, suggested in the literature, are defined on ensembles having fixed number of elements. To fit them as valid physicality quantifiers, which are defined on ensembles of any size, we define special forms of physicality quantifiers $\mathcal{I}_S^n$ or $\mathcal{I}_{SA}^n$, which are focused on  ensembles of size $n$. We define,  $\mathcal{I}_S^n(\mathcal{E}_S)=0$ and $\mathcal{I}_{SA}^n(\mathcal{E}_{SA})=0$, for any $\mathcal{E}_S\notin \mathcal{F}_S^n$ and $\mathcal{E}_{SA}\notin \mathcal{F}_{SA}^n$. We now show that  a large number of quantifiers considered so far, correspond to physicality quantifiers of the form $\mathcal{I}_S^n$ or $\mathcal{I}_{SA}^n$ (for various values of $n$) in such a way, that the physicality quantifier takes the same value as the quantifier, for ensembles of size $n$. We denote $\mathbb{Z}_k$ to be the set of positive integers from $1$ to $k$ and $[a,b]$ to be the closed interval of real numbers from $a$ to $b$.
\begin{enumerate}[label=(\roman*)]
 \item Breuer et. al. \cite{breuer} considered an equal mixture of states i.e. $p_1=p_2=1/2$ and defined 
 {\it distinguishability} of two states as their quantifier. We define the physicality quantifier as,
 \begin{align}
  \mathcal{I}_S^{BLP}\{p_1,p_2,\rho_1,\rho_2\} &= ||\rho_1 - \rho_2||_1;~p_1=p_2=1/2\nonumber\\
  &=0~~~~~~~~~~~~~;~p_i\neq 1/2,
 \end{align}
 where $||A||_1 = Tr\sqrt{A^{\dagger} A}$. The Markovianity criteria, corresponding to this quantifier, is popularly known as the {\it BLP-}criteria of Markovianity. It can be easily shown that the above quantity is bounded and non-increasing under CPTP maps \cite{nielsen_chuang}. Hence, we see $\mathcal{I}_S^{BLP}$ is a particular form of $\mathcal{I}_S^2$.
 
 \item Rajagopal et. al. \cite{udevi} also considered a form that corresponds to  $\mathcal{I}_S^2$. They took equal mixture of two states and used {\it fidelity} as their measure of non-Markovianity i.e $p_1=p_2=1/2$. We slightly modify their definition and define the physicality quantifier in the following form,
 \begin{align}
  \mathcal{I}_S^{Fid}\{p_1,p_2,\rho_1,\rho_2\} &= 1-||\sqrt{\rho_1}\sqrt{\rho_2}||_1;~p_1=p_2=1/2\nonumber\\
  &=0~~~~~~~~~~~~~~~~~~~;~ p_i\neq 1/2,
 \end{align}
 It is easy to show that $\mathcal{I}_S^{Fid}$ lies in the interval $[0,1]$ and non-increasing under CPTP maps \cite{udevi,nielsen_chuang}.

 \item Lu et. al \cite{fisher1} used {\it quantum fisher information} (QFI) as the quantifier.
 QFI is the coefficient of efficiency in estimation of some parameter, say $\theta$, encoded in quantum state $\rho_{\theta}$. It can be shown  that QFI is infinitesimal Bures distance between two states \cite{rhp_review}. Therefore, the corresponding physicality quantifier is defined as, 
 \begin{equation}
  \mathcal{I}_S^{QFI}(\rho_{\theta})=4\lim_{\delta \theta\rightarrow 0} \Big[\frac{D_B(\rho_{\theta+\delta\theta},\rho_{\theta})}{\delta \theta}\Big]^2,
 \end{equation}
 where $D_B(\rho_1,\rho_2)=\sqrt{2\big[1-||\sqrt{\rho_1}\sqrt{\rho_2}||_1\big]}$.  We expect $\rho_{\theta}$ to be well behaved function of $\theta$, in the sense that it is differentiable. Therefore, we expect QFI to be a bounded function. Also, it is easy to see from the definition, that QFI is  non-increasing under CPTP maps \cite{fisher1,rhp_review,fujiwara2001}. Hence, 
 $\mathcal{I}_S^{QFI}$ is a physicality quantifierof the form $\mathcal{I}_S^1$.

 \item Chen et. al. used {\it temporal steering weight} (TSW) \cite{temporalsteering} to quantify Markovianity. In this setting Alice performs measurement $M_{a|x}$ , ( $a\in\mathbb{Z} _{m_1}~,~x\in\mathbb{Z}_{m_2}$) on a system state , creating an ensemble $\{p(a|x),\rho_{a|x}\}_{a|x}$. The ensemble is then passed through a dynamical map $\Lambda_t$ and the TSW of the output ensemble is calculated at each instant $t$. It was shown in \cite{temporalsteering} that TSW is non-increasing under CPTP maps. Also, TSW refers to maximum value of $\mu$ in Eq. (4) of \cite{temporalsteering}. Therefore,  it is evident from the construction that $0\leq\mu\leq 1$. Thus we see, TSW corresponds to a physicality quantifier of the form $\mathcal{I}_S^m$, where $m=m_1m_2$. 

 \item Dhar et. al. \cite{interferometric} used {\it interferometric power} for their criteria of Markovianity. It can be shown to be non-increasing under CPTP maps \cite{interferometric}. Also, for any system-ancilla state, interferometric power is calculated by optimizing over local unitary operations on the ancilla side \cite{interferometricprl}. Therefore, it can be inferred that interferometric power is bounded. Hence, interferometric power is a valid physicality quantifier and is of the form $\mathcal{I}_{SA}^1$ .

 \item He et. al. introduced a measure of non-Markovianity based on {\it local quantum uncertainty} (LQU) \cite{heLQU}. It can be shown that LQU is non-increasing under CPTP maps. Also note, as  LQU is determined though minimization of a bounded function over unitaries, it is evident that LQU is bounded \cite{lqucharacterize}. Hence we conclude, LQU corresponds to a physicality quantifier of the form $\mathcal{I}_{SA}^1$.
 
 \item Luo et. al. \cite{luo} used {\it quantum mutual information} (QMI) as their quantifier. The corresponding  physicality quantifier is
 \begin{equation}
  \mathcal{I}_{SA}^{QMI}(\xi)=S(\xi_S)+S(\xi_A)-S(\xi), 
 \end{equation}
 where $\xi\in\mathcal{P}_+(\mathcal{H_S}\otimes\mathcal{H_A})$, $\xi_{S/A}=Tr_{A/S}(\xi)$ are reduced density matrices and $S(\rho)=-\rho\log\rho$ is the usual von Neumann entropy. QMI is known to be bounded \cite{MMWilde} and non-increasing under CPTP maps \cite{dataprocessing}. Note that $\mathcal{I}_{SA}^{QMI}$ is a special form of $\mathcal{I}_{SA}^1$.
 
\end{enumerate}
\section{Detailed proof of Theorem 1}
\label{appendix theorem1}

\begin{proof}
 Assume $\Lambda_t$ is $n$-SA-Markovian. For any system based physicality quantifier $\mathcal{I}_S^n$, we define a real valued function $\mathcal{I}_{SA}^n$ on $\mathcal{F}_{SA}$, such that $\mathcal{I}_{SA}^n(\mathcal{E}_{SA})=\mathcal{I}_S^n(\mathcal{E}_S)$, where $\mathcal{E}_S=\{p_i;\rho_i\}$, $\mathcal{E}_{SA}=\{p_i;\xi_i\}$ and $Tr_A(\xi_i)=\rho_i$. This implies $\mathcal{I}_{SA}^n(\mathcal{E}_{SA})=0$ for $\mathcal{E}_{SA}\notin\mathcal{F}_{SA}^n$ and for any CPTP map $T\in\mathcal{T}(\mathcal{H}_S,\mathcal{H}_S)$, we get $\mathcal{I}_{SA}^n\{p_i;(T\otimes I)[\xi_i]\}=\mathcal{I}_S^n\{p_i;T[\rho_i]\}\leq\mathcal{I}_S^n\{p_i;\rho_i\}=\mathcal{I}_{SA}^n\{p_i;\xi_i\}$. Also note, as $\mathcal{I}_S^n$ is bounded, $\mathcal{I}_{SA}^n$ must also be bounded. Therefore, we see for any physicality quantifier $\mathcal{I}_S^n$ on the system, there exists a physicality quantifier  on system-ancilla, which is of the form $\mathcal{I}_{SA}^n$. These imply $\Phi_t^{\mathcal{I}_{SA}^n}\big\{p_i;\xi_i\big\}=\Phi_t^{\mathcal{I}_S^n}\big\{p_i;\rho_i\big\}$ (see Eqs. (\ref{system information quantifier}) and (\ref{system ancilla information quantifier})). Thus, monotonic decrease of $\Phi_t^{\mathcal{I}_{SA}^n}$ implies monotonic decrease of $\Phi_t^{\mathcal{I}_S^n}$. Thus, $\Lambda_t$ is $n$-S-Markovian.
\end{proof}

\section{Detailed proof of Theorem 2}
\label{appendix theorem2}
\begin{proof}
 Note as $\Lambda_t$ is invertible, it is also divisible i.e. it can be decomposed in the form of Eq. (\ref{cpd}).
 
 $(i)\implies(ii)$. Consider any hermitian operator  $H\in\mathcal{L}(\mathcal{H_S}\otimes\mathcal{H_A})$. As $\Lambda_s$ is an invertible and positive map for any $s>0$, there exists a hermitian operator $\tilde{H}$ such that $(\Lambda_s\otimes I)[\tilde{H}]=H$. Also from \cite{wise}, we know that any hermitian operator can be written as a positive number multiple of a Helstrom matrix i.e. $\tilde{H}=\lambda(p_1\xi_1-p_2\xi_2)$ for $\lambda>0$, $\xi_1,\xi_2\in\mathcal{P}_+(\mathcal{H_S}\otimes\mathcal{H_A})$ and probabilities $p_1,p_2$. As $\Lambda_t$ is GTDE-Markovianity, for any $t>s$, we get $||(V_{t,s}\otimes I)[H]||_1=||(\Lambda_t\otimes I)[\tilde{H}]||_1\leq||(\Lambda_s\otimes I)[\tilde{H}]||_1=||H||_1$ for any hermitian $H$. This implies $V_{t,s}\otimes I$ is a positive map for any $t>s$ \cite{kossakowski1972quantum}. Therefore, $V_{t,s}$ is CP for any $t>s$. Thus, $\Lambda_t$ is CPD.
 
 $(ii)\implies(iii)$. As $\Lambda_t$ is CPD, $V_{t,s}$ is CPTP. Hence from condition \ref{condition1} and Eq. (\ref{system ancilla information quantifier}), we get $\Phi_t^{\mathcal{I}_{SA}}=\mathcal{I}_{SA}\big\{p_i;(V_{t,s}\otimes I)(\Lambda_s\otimes I)[\xi_i]\big\}\leq\Phi_s^{\mathcal{I}_{SA}}$ for any form of $\mathcal{I}_{SA}$ and any ensemble $\mathcal{E}_{SA}=\{p_i;\xi_i\}$. Therefore, $\Lambda_t$ is SA-Markovian. Hence, from theorem \ref{thm0}, $\Lambda_t$ is IB-Markovian.
 
 $(iii)\implies(iv)$. This follows from the definition of SA-Markovianity.
 
 $(iv)\implies(i)$. This is trivial as $\mathcal{I}_{SA}^{GTDE}$ in Eq. (\ref{gtde quantifier}), given is of the form $\mathcal{I}_{SA}^2$.
\end{proof}

\section{Detailed proof of Lemma 1}
\label{appendix lemma1}

\begin{proof}
 Note, as $\Lambda_t$ is GTD-Markovian it is divisibile i.e. it can be expressed as Eq. (\ref{cpd}) (see Proposition 2 of \cite{chruscinski2017universal}). First Alberti et. al. \cite{alberti} and later Huang et. al. \cite{huang} showed that the necessary and sufficient condition for a collection of qubit states $\rho_1,\rho_2,\rho_1',\rho_2'\in\mathcal{P}_+(\mathcal{H_S})$ to have a CPTP map $T_{12}$ connecting them i.e. $T_{12}[\rho_i]=\rho_i'~;~i=1,2$, is $||\rho_1'-x\rho_2'||_1\leq||\rho_1-x\rho_2||_1$ for any $x\geq0$. Since, $p_1$ and $p_2$ in the GTD quantifier in Eq. (\ref{gtd quantifier}) are probabilities, without loss of generality we can choose $p_1>0$. For any $t>s$, let us now choose $\rho_i'=V_{t,s}[\rho_i]$ ; $i=1,2$, and $x=p_2/p_1$. Therefore,  if $\Lambda_t$ is GTD-Markovian, the necessary and sufficient condition for the existence of $T_{12}$ connecting $\rho_1,\rho_2,\rho_1'$ and $\rho_2'$  is satisfied for any $t>s$. Hence, we conclude there must exist a CPTP map $T_{12}$ satisfying Eq. (\ref{alberti condition}).
\end{proof}

\bibliography{gen_blp1.bib}

\end{document}